\journal{Communications in Nonlinear Science and Numerical Simulation}
\newcommand\mymatrix[1]{\bm{\mathrm{#1}}}
\newtheorem{lemma}{Lemma}
\newtheorem{proposition}{Proposition}
\newlength\imagewidth
\definecolor{dgreen}{rgb}{0,.6,0}
\newcommand\modified[1]{\textcolor{dgreen}{#1}} 
\newcommand\Xkey{\textit{Xkey}}
\newcommand\Dkey{\textit{Dkey}}
\newcommand\CKS{\textit{CKS}}
\newcommand\key{\textit{key}}
\newcommand\HD{\textup{HD}}
\newcommand\VD{\textup{VD}}
\newcommand\mHD{\textup{mHD}}
\newcommand\mVD{\textup{mVD}}
\begin{document}

\begin{frontmatter}

\title{Breaking a modified substitution-diffusion image cipher based on chaotic standard and logistic maps}

\author[hk-polyu]{Chengqing Li\corref{corr}}
\ead{zjulcq@gmail.com}
\author[gm-constan]{Shujun Li\corref{corr}}
\ead[URL]{www.hooklee.com}
\author[hk-polyu]{Kwok-Tung Lo}
\cortext[corr]{Corresponding authors.}

\address[hk-polyu]{Department of Electronic and Information Engineering, The Hong Kong
Polytechnic University, Hong Kong, China}

\address[gm-constan]{Fachbereich Informatik und Informationswissenschaft,
Universit\"at Konstanz, Fach M697, Universit\"atsstra{\ss}e 10,
78457 Konstanz, Germany}

\begin{abstract}
Recently, an image encryption scheme based on chaotic standard and
logistic maps was proposed by Patidar et al. It was
later reported by Rhouma et al. that an
equivalent secret key can be reconstructed with only one
known/chosen-plaintext and the corresponding ciphertext.
Patidar et al. soon modified the original scheme
and claimed that the modified scheme is secure against Rhouma et
al.'s attack. In this paper, we point out that the modified scheme
is still insecure against the same known/chosen-plaintext attack.
In addition, some other security defects existing in both
the original and the modified schemes are also reported.
\end{abstract}

\begin{keyword}
cryptanalysis \sep known-plaintext attack \sep
chosen-plaintext attack \sep encryption \sep image \sep
chaos
\end{keyword}

\end{frontmatter}

\section{Introduction}

With the rapid development of information technology, multimedia
data are transmitted over all kinds of wired/wireless networks more
and more frequently. Consequently, security
of multimedia data becomes a serious concern in many
applications. However, traditional text
encryption schemes cannot be used in a naive way to
protect multimedia data efficiently in some applications, mainly due
to some special requirements of the whole
multimedia system. This challenge stirs the design of special
multimedia encryption schemes to become a hot research topic
in the past two decades. Because of the subtle similarity between
chaos and cryptography, a great number of multimedia encryption
schemes based on chaos have been presented
\cite{Chen&Yen:RCES:JSA2003, YaobinMao:CSF2004,
Flores:EncryptLatticeChaos06,Wong:ChaosEncrypt:IEEETCASII}.
Unfortunately, many of them have been found to have security
problems from the cryptographical point of view
\cite{Kaiwang:PLA2005,Li:AttackingMaoScheme2007,
Li:AttackingRCES2008, David:AttackingChaos08,Li:BreakImageCipher:IVC09}.
Some general rules about evaluating security of chaos-based encryption
schemes can be found in \cite{AlvarezLi:Rules:IJBC2006,
Li:ChaosImageVideoEncryption:Handbook2004}.

Since 2003, Pareek et al. have proposed a number of different
encryption schemes based on one or more chaotic maps
\cite{Pareek:PLA2003, Pareek:CNSNS2005, Pareek:ImageEncrypt:IVC2006,
Pareek:CNSNS2009}. Recent cryptanalytic results
\cite{Alvarez:PLA2003, Li:AttackingCNSNS2008, Li:AttackingIVC2009}
have shown that all the three schemes proposed in
\cite{Pareek:PLA2003, Pareek:CNSNS2005, Pareek:ImageEncrypt:IVC2006}
have security defects. In \cite{Pareek:CNSNS2009}, a new image
encryption scheme based on the logistic and standard maps
was proposed, where the two maps are used to generate a
pseudo-random number sequence (PRNS) controlling two kinds of
encryption operations. In \cite{Rhouma:BreakPareek:CNSNS09},
Rhouma et al. reported that the scheme is not
secure in the sense that an equivalent key can be obtained from only
one known/chosen plain-image and the corresponding cipher-image. To
resist Rhouma et al.'s attack, a modified
version of the original scheme was proposed in
\cite{Pareek:modify:CNSNS2009}. The present paper reports the
following findings: 1) the modified image encryption
scheme can still be broken by the same known/chosen-plaintext
attack under the same condition; 2) there are some other security
defects existing in both the modified and the original schemes.

The rest of this paper is organized as follows.
Section~\ref{sec:encryptscheme} briefly introduces the
image encryption schemes under study and the
known/chosen-plaintext attack reported in
\cite{Rhouma:BreakPareek:CNSNS09}. Our cryptanalytic
results are presented in Sec.~\ref{sec:cryptanalysis} in detail. The
last section concludes the paper.

\section{The image encryption schemes under study and Rhouma
et al.'s attack}
\label{sec:encryptscheme}

For both schemes, we make the following assumptions to
ease our description\footnote{To make the presentation more concise
and more consistent, some
notations in the original papers \cite{Pareek:CNSNS2009,
Pareek:modify:CNSNS2009} are also modified.}. The plaintext is a RGB true-color image of size $H\times W$
(height$\times$width), which can be denoted by an $H\times W$ matrix
of 3-tuple pixel values $\mymatrix{I}=\{I(i,j)\}_{0\leq
i\leq H-1 \atop 0\leq j\leq W-1}=\{(R(i,j), G(i,j),
B(i,j))\}_{0\leq i\leq H-1 \atop 0\leq j\leq W-1}$.
Similarly, the ciphertext corresponding to $\mymatrix{I}$
is denoted by $\mymatrix{I}'=\{I'(i,j)\}_{0\leq i\leq H-1
\atop 0\leq j\leq W-1}=\{(R'(i,j), G'(i,j), B'(i,j))\}_{0\leq i\leq
H-1 \atop 0\leq j\leq W-1}$. To further facilitate our
discussion, we adopt the terms in \cite{Pareek:modify:CNSNS2009}:
the original image encryption scheme is called PPS09 and the
modified one mPPS09.

\subsection{The original image encryption scheme PPS09 \cite{Pareek:CNSNS2009}}

\begin{itemize}
\item
\textit{Secret key}: three floating-point numbers $x_0$, $y_0$, $K$,
and one integer $N$, where $x_0$, $y_0\in(0, 2\pi)$, $K>18$,
$100<N<1100$.

\item
\textit{Initialization}: prepare data for encryption/decryption by
performing the following steps.
\begin{itemize}
\item
Generate four XORing keys as follows: $\Xkey(1)=\lfloor
256x_0/(2\pi)\rfloor$, $\Xkey(2)=\lfloor 256y_0/(2\pi)\rfloor$,
$\Xkey(3)=\lfloor K\bmod 256 \rfloor$, $\Xkey(4)=(N \bmod 256)$.
Then, generate a pseudo-image
$\mymatrix{I}_{\Xkey}=\{(R_{\Xkey}(i,j), G_{\Xkey}(i,j),
B_{\Xkey}(i,j))\}_{0\leq i\leq H-1 \atop 0\leq j\leq W-1}$ by
filling an $H\times W$ matrix with the four XORing keys repeatedly:
$R_{\Xkey}(i,j)=\Xkey((3k\bmod 4)+1)$,
$G_{\Xkey}(i,j)=\Xkey(((3k+1)\bmod 4)+1)$,
$B_{\Xkey}(i,j)=\Xkey(((3k+2)\bmod 4)+1)$, where $k=iW+j$.

\item
Iterate the standard map Eq.~(\ref{eq:standardmap}) from the initial
conditions $(x_0,y_0)$ for $N$ times to obtain a new chaotic state
$(x_0', y_0')$. Then, further iterate it for $HW$ more
times to get $HW$ chaotic states
$\{(x_i,y_i)\}_{i=1}^{HW}$.

\begin{equation}
\label{eq:standardmap}
\begin{cases}
x  =  (x+K\sin(y))\bmod(2\pi),\\
y  =  (y+x+K\sin(y))\bmod(2\pi),
\end{cases}
\end{equation}

\item
Iterate the logistic map Eq.~(\ref{eq:logistic}) from the
initial condition $z_0=((x_0'+y_0')\bmod 1)$ for $N$ times to get a
new initial condition $z_0'$. Then, further iterate it for
$HW$ times to get $HW$ chaotic states
$\{z_i\}_{i=1}^{HW}$.

\begin{equation}
\label{eq:logistic} z=4z(1-z).
\end{equation}

\item
Generate a pseudo-image
$\mymatrix{I}_{\CKS}=\{(R_\textit{\CKS}(i,j),G_\textit{\CKS}(i,j),B_\textit{\CKS}(i,j))\}_{0\leq
i\leq H-1 \atop 0\leq j\leq W-1}$ by filling its R, G and B channels
with the three chaotic key streams (CKS) $\{x_k\}_{k=1}^{HW}$,
$\{y_k\}_{k=1}^{HW}$ and $\{z_k\}_{k=1}^{HW}$:
$R_\textit{\CKS}(i,j)=\left\lfloor
256x_k/(2\pi)\right\rfloor$,
$G_\textit{\CKS}(i,j)=\left\lfloor
256y_k/(2\pi)\right\rfloor$,
$B_\textit{\CKS}(i,j)=\left\lfloor 256z_k\right\rfloor$,
where $k=iW+j+1$.
\end{itemize}

\item
\textit{Encryption procedure}: a simple concatenation of the
following four encryption operations.

\begin{itemize}
\item
\textit{Confusion I}: Mask the plain-image $\mymatrix{I}$
by $\mymatrix{I}_{\Xkey}$ to obtain
$\mymatrix{I}^{\star}$, i.e.,
$\mymatrix{I}^{\star}=\mymatrix{I}\oplus\mymatrix{I}_{\Xkey}$.

\item
\textit{Horizontal Diffusion (HD)}: Scan
$\mymatrix{I}^{\star}=\{I^{\star}(i,j)\}_{0\leq i\leq H-1 \atop
0\leq j\leq W-1}$ rowwise from the upper-left pixel to the
bottom-right one, and mask each pixel value
(except for the first one) by its predecessor in the
scan. Denoting the output of this step by
$\mymatrix{I}^*=\{I^*(i,j)\}_{0\leq i\leq H-1 \atop 0\leq j\leq
W-1}$, the HD procedure is described as follows: 1)
$I^*(0,0)=I^{\star}(0,0)$; 2) for $k=1,\ldots,HW-1$,
\begin{equation} I^*(i,j)=I^{\star}(i,j) \oplus
I^*(i',j'),\label{eq:HD}
\end{equation}
where $i=\lfloor k/W\rfloor$, $j=(k\bmod W)$, $i'=\lfloor
(k-1)/W\rfloor$, $j'=((k-1)\bmod W)$.

\item
\textit{Vertical Diffusion (VD)}: Scan
$\mymatrix{I}^*$ columnwise from the bottom-right pixel to the
upper-left one, and mask each pixel value (except for
the first one) by its predecessor in the scan.
Denoting the output of this step by
$\mymatrix{I}^{**}=\{R^{**}(i,j),G^{**}(i,j),B^{**}(i,j)\}_{0\leq
i\leq H-1 \atop 0\leq j\leq W-1}$, the VD procedure can be described
as follows: 1) $I^{**}(H-1,W-1)=I^*(H-1,W-1)$; 2) for
$k=HW-2,\ldots,0$,
\begin{equation}
I^{**}(i,j)=I^*(i,j)\oplus \overline{I^{**}(i',j')},\label{eq:VD}
\end{equation}
where $i=(k\mod H)$, $j=\lfloor k/H \rfloor$, $i'=((k+1)\mod H)$,
$j'=\lfloor (k+1)/H \rfloor$, and
\[
\overline{I^{**}(i',j')}=(G^{**}(i',j')\oplus
B^{**}(i',j'),R^{**}(i',j')\oplus B^{**}(i',j'),R^{**}(i',j')\oplus
G^{**}(i',j')).
\]

\item
\textit{Confusion II}: Mask the pixel values in
$\mymatrix{I}^{**}$ with $\mymatrix{I}_{\CKS}$ to get the ciphertext
$\mymatrix{I}'$, i.e.,
$\mymatrix{I}'=\mymatrix{I}^{**}\oplus\mymatrix{I}_{\CKS}$.
\end{itemize}

\item
\textit{Decryption procedure}: the simple reversion of the above
encryption procedure.
\end{itemize}

\subsection{Rhouma et al.'s attack \cite{Rhouma:BreakPareek:CNSNS09}}

Denoting the horizontal and vertical diffusion processes by HD and
VD, respectively, the encryption procedure of PPS09 can be
represented as follows:
\begin{equation}
\mymatrix{I}'=\VD(\HD(\mymatrix{I}\oplus\mymatrix{I}_{\Xkey}))\oplus\mymatrix{I}_{\CKS}.\label{eq:PPS09}
\end{equation}

In \cite{Rhouma:BreakPareek:CNSNS09}, Rhouma et al. showed that the
HD and VD processes are commutative with XOR operations:
\begin{eqnarray*}
\HD(\bm{X}\oplus\bm{Y}) & = & \HD(\bm{X})\oplus\HD(\bm{Y}),\\
\VD(\bm{X}\oplus\bm{Y}) & = & \VD(\bm{X})\oplus\VD(\bm{Y}).
\end{eqnarray*}
Therefore, Eq.~\eqref{eq:PPS09} is equivalent to the following one:
\begin{equation}
\mymatrix{I}'=\VD(\HD(\mymatrix{I}))\oplus
\VD(\HD(\mymatrix{I}_{\Xkey}))\oplus
\mymatrix{I}_{\CKS}.\label{eq:PPS09eq}
\end{equation}
Assuming $\mymatrix{I}_{\key}=\VD(\HD(\mymatrix{I}_{\Xkey}))\oplus
\mymatrix{I}_{\CKS}$, we can observe the following two important
facts:
\begin{enumerate}
\item
neither HD nor VD depends on the key;

\item
$\mymatrix{I}_{\key}$ does not depend on the plaintext
$\mymatrix{I}$ or the ciphertext $\mymatrix{I}'$.
\end{enumerate}
The above facts immediately lead to a conclusion:
$\mymatrix{I}_{\key}$ can be used as an equivalent key to encrypt
any plaintext of the same size $H\times W$ and decrypt any
ciphertext of size $H\times W$. A known/chosen-plaintext attack can
be easily mounted to derive $\mymatrix{I}_{\key}$ from a
known/chosen plaintext $\mymatrix{I}$ and its corresponding
ciphertext $\mymatrix{I}'$:
\begin{equation}
\mymatrix{I}_{\key}=\VD(\HD(\mymatrix{I}))\oplus\mymatrix{I}'.
\end{equation}

\subsection{The modified image encryption scheme mPPS09 \cite{Pareek:modify:CNSNS2009}}

To enhance the security of PPS09 against Rhouma et al.'s attack, in
\cite{Pareek:modify:CNSNS2009} Patidar et al. proposed a modified
edition of PPS09 by making both HD and VD dependent on the secret
key.

The modified key-dependent HD and VD processes are denoted by mHD
and mVD in \cite{Pareek:modify:CNSNS2009}. Both mHD and mVD are
based on 16 diffusion keys derived from the secret key
$(x_0,y_0,K,N)$:
\begin{itemize}
\item
for $i=1,\ldots,5$, $\Dkey(i)=\sum_{j=0}^{2}a_{3\cdot (i-1)+j}\cdot
10^{2-j}\bmod 256$, where $x_0=a_1.a_2\ldots a_{15}\ldots$ and $a_i$
are decimal digits representing $x_0$;

\item
for $i=6,\ldots,10$, $\Dkey(i)=\sum_{j=0}^{2}b_{3\cdot (i-6)+j}\cdot
10^{2-j}\bmod 256$, where $y_0=b_1.b_2\ldots b_{15}\ldots$ and $b_i$
are decimal digits representing $y_0$;

\item
for $i=11,\ldots,15$, $\Dkey(i)=\sum_{j=0}^{2}c_{3\cdot
(i-11)+j}\cdot 10^{2-j}\bmod 256$, where $K=\ldots c_1.c_2\ldots
c_{15}\ldots$ and $c_i$ are decimal digits representing $K$;

\item
$\Dkey(16)=(N \bmod 256)$.
\end{itemize}

The mHD process is modified from HD by replacing Eq.~\eqref{eq:HD}
with the following equation:
\begin{equation}
I^*(i,j)=I^{\star}(i,j) \oplus I^*(i',j')\oplus\Dkey^*(k-1),
\end{equation}
where
\[
\Dkey^*(k)=(\Dkey((k\bmod 16)+1),\Dkey((k\bmod 16)+1),\Dkey((k\bmod
16)+1)).
\]

The mVD process is modified from VD by replacing Eq.~\eqref{eq:VD}
with the following equation:
\begin{equation}
I^{**}(i,j)=I^*(i,j)\oplus \overline{I^{**}(i',j')}\oplus
\Dkey^{**}(k'),
\end{equation}
where $k'=HW-2-k$ and
\[
\Dkey^{**}(k')=(\Dkey(3k'\bmod 16)+1),\Dkey(((3k'+1)\bmod
16)+1),\Dkey(((3k'+2)\bmod 16)+1)).
\]

\section{Cryptanalysis}
\label{sec:cryptanalysis}

In this section, we first show that the key-dependent horizontal and
vertical diffusion steps mHD and mVD do not increase the security of
mPPS09 against Rhouma et al.'s attack. Then we point out some common
security weaknesses in both PPS09 and mPPS09.

\subsection{Insecurity of mPPS09 against Rhouma et al.'s
attack}

Although both mHD and mVD are dependent on the secret key, we
noticed that they can be represented in an equivalent form which
renders the key-dependence useless. Assuming $\mymatrix{X}$ is the
input matrix and $\mymatrix{\Theta}$ is a zero matrix of the same
size as $\mymatrix{X}$, we have the following two lemmas.

\begin{lemma}
$\mHD(\mymatrix{X})=\HD(\mymatrix{X})\oplus
\mHD(\mymatrix{\Theta})$.\label{lemma:HD-mHD}
\end{lemma}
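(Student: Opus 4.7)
The plan is to prove this identity by a straightforward induction on the linear scan index $k$ used in the definitions of $\HD$ and $\mHD$. Both operations share the same sweep ordering and the same recursion structure; the only difference is that $\mHD$ adds an extra XOR term $\Dkey^*(k-1)$ at each step. Since XOR is associative and commutative, and since a zero input kills the $\HD$ part but leaves the $\Dkey^*$ part intact, the $\Dkey^*$ contribution should factor out as $\mHD(\mymatrix{\Theta})$.

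First I would fix notation: write $A=\mHD(\mymatrix{X})$, $B=\HD(\mymatrix{X})$, $C=\mHD(\mymatrix{\Theta})$, and index their entries by the scan index $k$, writing $A(k)$ for $A(i,j)$ with $i=\lfloor k/W\rfloor$, $j=k\bmod W$ (similarly for $B$, $C$, and the inputs $X(k)$, $\Theta(k)=0$). The claim to prove is $A(k)=B(k)\oplus C(k)$ for all $k\in\{0,\ldots,HW-1\}$.

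For the base case $k=0$, the defining equations give $A(0)=X(0)$, $B(0)=X(0)$, and $C(0)=\Theta(0)=0$, so the identity holds trivially. For the inductive step, assume $A(k-1)=B(k-1)\oplus C(k-1)$. Applying the $\mHD$ recursion to $\mymatrix{X}$ and to $\mymatrix{\Theta}$ separately,
\begin{align*}
A(k) &= X(k)\oplus A(k-1)\oplus\Dkey^*(k-1),\\
C(k) &= 0\oplus C(k-1)\oplus\Dkey^*(k-1),
\end{align*}
while the $\HD$ recursion gives $B(k)=X(k)\oplus B(k-1)$. Substituting the inductive hypothesis into the expression for $A(k)$ and regrouping the XORs yields $A(k)=\bigl(X(k)\oplus B(k-1)\bigr)\oplus\bigl(C(k-1)\oplus\Dkey^*(k-1)\bigr)=B(k)\oplus C(k)$, closing the induction.

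There is no real obstacle here; the main thing to be careful about is keeping the per-channel (R, G, B) structure of $\Dkey^*(k-1)$ consistent with the per-channel XORs, but since $\Dkey^*$ is the same at the same index in both $\mHD(\mymatrix{X})$ and $\mHD(\mymatrix{\Theta})$, it cancels at every step exactly as above. The entire argument is just the linearity of the $\mHD$ recursion over XOR when viewed as an affine map of its input.
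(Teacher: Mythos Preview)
Your proof is correct and follows essentially the same approach as the paper: an induction on the scan index $k$, using the recursion formulas for $\HD$, $\mHD$ applied to $\mymatrix{X}$, and $\mHD$ applied to $\mymatrix{\Theta}$, together with the associativity/commutativity of XOR. The only cosmetic difference is that you introduce the shorthand $A,B,C$ and index linearly by $k$, whereas the paper works directly with the $(i,j)$ coordinates; the argument is otherwise identical.
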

\begin{proof}
This lemma can be easily proved with mathematical induction on $k$.

For $k=0$, i.e., $i=j=0$, we have $\mHD(X(0,0))=X(0,0)$ and
$\HD(X(0,0))\oplus \mHD(\Theta(0,0))=X(0,0)\oplus(0,0,0)=X(0,0)$.
This lemma holds. Then, assume the lemma is true for $k\geq 0$, let
us prove the case of $k+1$.

For $k+1$, i.e., $i=\lfloor(k+1)/W\rfloor$, $j=((k+1)\bmod W)$,
$i'=\lfloor k/W\rfloor$ and $j'=(k\bmod W)$,
$\mHD(X(i,j))=X(i,j)\oplus\mHD(X(i',j'))\oplus\Dkey^*(k)$. According
to the assumption on $k$, we have
$\mHD(X(i',j'))=\HD(X(i',j'))\oplus\mHD(\Theta(i',j'))$. Thus,
$\mHD(X(i,j))=X(i,j)\oplus\HD(X(i',j'))\oplus\mHD(\Theta(i',j'))\oplus\Dkey^*(k)$.
Noting that $\HD(X(i,j))=X(i,j)\oplus\HD(X(i',j'))$, we get
$\mHD(X(i,j))=\HD(X(i,j))\oplus\mHD(\Theta(i',j'))\oplus\Dkey^*(k)$.
Further note that
$\mHD(\Theta(i,j))=\Theta(i,j)\oplus\mHD(\Theta(i',j'))\oplus\Dkey^*(k)=\mHD(\Theta(i',j'))\oplus\Dkey^*(k)$.
This immediately leads to
$\mHD(X(i,j))=\HD(X(i,j))\oplus\mHD(\Theta(i,j))$.
\end{proof}

\begin{lemma}
$\mVD(\mymatrix{X})=\VD(\mymatrix{X})\oplus
\mVD(\mymatrix{\Theta})$.
\end{lemma}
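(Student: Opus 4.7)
The plan is to mimic the induction argument used for Lemma~\ref{lemma:HD-mHD}, with two adjustments dictated by the structure of mVD. First, because the vertical diffusion scans from the bottom-right pixel to the upper-left one, the induction must be run \emph{backwards}: the base case is $k=HW-1$, i.e.\ the pixel $(H-1,W-1)$, and the inductive step will go from $k+1$ down to $k$, using the indices $i=(k\bmod H)$, $j=\lfloor k/H\rfloor$, $i'=((k+1)\bmod H)$, $j'=\lfloor (k+1)/H\rfloor$ exactly as in Eq.~\eqref{eq:VD}. Second, the recursion now involves the channel-mixing map $\overline{\,\cdot\,}$ rather than plain identity, so the inductive step cannot be copied verbatim from Lemma~\ref{lemma:HD-mHD}.

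The base case is immediate: $\mVD(X(H-1,W-1))=X(H-1,W-1)=\VD(X(H-1,W-1))\oplus\mVD(\Theta(H-1,W-1))$, since $\mVD(\Theta(H-1,W-1))=\Theta(H-1,W-1)=(0,0,0)$.

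The crucial ingredient in the inductive step is the observation that the map $\overline{\,\cdot\,}$ is XOR-linear, i.e.\ $\overline{U\oplus V}=\overline{U}\oplus\overline{V}$ for any two 3-tuples $U,V$. This holds because each of the three output components of $\overline{\,\cdot\,}$ is a fixed XOR of two of the three input components, and XOR distributes over itself componentwise. Granting this, assume the lemma holds at index $k+1$. Applying it to $(i',j')$ gives $\mVD(X(i',j'))=\VD(X(i',j'))\oplus\mVD(\Theta(i',j'))$; overlining and using XOR-linearity yields
\[
\overline{\mVD(X(i',j'))}=\overline{\VD(X(i',j'))}\oplus\overline{\mVD(\Theta(i',j'))}.
\]
Substituting this into the mVD recurrence and using $\VD(X(i,j))=X(i,j)\oplus\overline{\VD(X(i',j'))}$ collapses the $X$-part into $\VD(X(i,j))$, while the remaining terms $\overline{\mVD(\Theta(i',j'))}\oplus\Dkey^{**}(k')$ are exactly $\mVD(\Theta(i,j))$ by the mVD recurrence applied to $\mymatrix{\Theta}$ (whose $(i,j)$-entry contributes nothing). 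This gives $\mVD(X(i,j))=\VD(X(i,j))\oplus\mVD(\Theta(i,j))$, closing the induction.

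The only genuinely new step compared with Lemma~\ref{lemma:HD-mHD} is verifying the XOR-linearity of $\overline{\,\cdot\,}$; once that is in hand, the bookkeeping is routine and the proof is otherwise structurally identical. I therefore expect this linearity observation to be the main (and only) obstacle.
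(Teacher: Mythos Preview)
Your proof is correct and follows the same approach the paper intends: descending induction on $k$ starting from $k=HW-1$, mirroring the argument of Lemma~\ref{lemma:HD-mHD}. The paper's own proof is a one-line remark that the argument is ``similar'' with the induction reversed; you have in fact been more careful than the paper by isolating the one genuinely new ingredient, namely the XOR-linearity of the channel-mixing map $\overline{\,\cdot\,}$, which is indeed required for the inductive step here but was not needed in Lemma~\ref{lemma:HD-mHD}.
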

\begin{proof}
This lemma can be proved in a similar way to
Lemma~\ref{lemma:HD-mHD}, but the mathematical induction should be
made in descending order on $k$ (starting from $k=HW-1$ and ending
at $k=0$).
\end{proof}

The above two lemmas lead to the following proposition.
\begin{proposition}
The encryption procedure of mPPS09 is equivalent to the following
equation:
\begin{equation}
\mymatrix{I}'=\VD(\HD(\mymatrix{I}))\oplus \tilde{\bm{I}}_{\key},
\label{eq:EncryptFunction}
\end{equation}
where $\tilde{\bm{I}}_{\key}=\VD(\HD(\mymatrix{I}_{\Xkey}))\oplus
\VD(\mHD(\Theta))\oplus \mVD(\Theta)\oplus \mymatrix{I}_{\CKS}$.
\end{proposition}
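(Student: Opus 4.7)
The plan is to substitute the two lemmas into the definition of the mPPS09 encryption procedure and then peel off each nonlinear-looking operation by invoking linearity (with respect to $\oplus$) of $\HD$ and $\VD$, which was already established by Rhouma et al.\ in the excerpt.

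First, I would write the mPPS09 encryption procedure in the compact form $\mymatrix{I}'=\mVD(\mHD(\mymatrix{I}\oplus\mymatrix{I}_{\Xkey}))\oplus\mymatrix{I}_{\CKS}$, analogous to Eq.~\eqref{eq:PPS09}. Next, I would apply Lemma~\ref{lemma:HD-mHD} to the innermost expression to obtain $\mHD(\mymatrix{I}\oplus\mymatrix{I}_{\Xkey})=\HD(\mymatrix{I}\oplus\mymatrix{I}_{\Xkey})\oplus \mHD(\mymatrix{\Theta})$, and then use the $\oplus$-linearity of $\HD$ to split this into $\HD(\mymatrix{I})\oplus \HD(\mymatrix{I}_{\Xkey})\oplus \mHD(\mymatrix{\Theta})$.

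Then I would apply Lemma~2 to the outer $\mVD$ to replace it by $\VD(\cdot)\oplus \mVD(\mymatrix{\Theta})$, and once again exploit the $\oplus$-linearity of $\VD$ to distribute it over the three-term XOR obtained in the previous step. Collecting the plaintext-dependent term $\VD(\HD(\mymatrix{I}))$ on one side and all remaining plaintext-independent terms on the other, together with the final XOR by $\mymatrix{I}_{\CKS}$, gives exactly the claimed $\tilde{\bm{I}}_{\key}=\VD(\HD(\mymatrix{I}_{\Xkey}))\oplus \VD(\mHD(\Theta))\oplus \mVD(\Theta)\oplus \mymatrix{I}_{\CKS}$.

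There is no real obstacle: once both lemmas are in hand, the derivation is a straightforward algebraic chain. The only subtle point worth flagging is that the argument relies crucially on the fact that the ``modification'' in $\mHD$ and $\mVD$ enters only as an additive (XOR) key-dependent offset, so it commutes through the diffusion operators via the two lemmas; this is why it is possible to isolate $\VD(\HD(\mymatrix{I}))$ as the sole plaintext-dependent contribution, which in turn is exactly what makes the equivalent-key attack of Rhouma et al.\ carry over to mPPS09.
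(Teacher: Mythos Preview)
Your proposal is correct and follows essentially the same approach as the paper: start from $\mymatrix{I}'=\mVD(\mHD(\mymatrix{I}\oplus\mymatrix{I}_{\Xkey}))\oplus\mymatrix{I}_{\CKS}$, apply Lemma~1 then Lemma~2 to peel off the key-dependent offsets, and use the $\oplus$-linearity of $\HD$ and $\VD$ to distribute and collect terms. The only cosmetic difference is that the paper postpones the split $\HD(\mymatrix{I}\oplus\mymatrix{I}_{\Xkey})=\HD(\mymatrix{I})\oplus\HD(\mymatrix{I}_{\Xkey})$ to the final line, whereas you do it immediately after invoking Lemma~1; the argument is otherwise identical.
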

\begin{proof}
From the properties of HD \& VD and Lemmas 1 \& 2, we can make the
following deduction:
\begin{eqnarray*}
\mymatrix{I}' & = &
\mVD(\mHD(\mymatrix{I}\oplus\mymatrix{I}_{\Xkey}))\oplus\mymatrix{I}_{\CKS},\\
& = &
\mVD(\HD(\mymatrix{I}\oplus\mymatrix{I}_{\Xkey})\oplus\mHD(\Theta))\oplus\mymatrix{I}_{\CKS},\\
& = &
\VD(\HD(\mymatrix{I}\oplus\mymatrix{I}_{\Xkey})\oplus\mHD(\Theta))\oplus\mVD(\Theta)\oplus\mymatrix{I}_{\CKS},\\
& = &
\VD(\HD(\mymatrix{I}\oplus\mymatrix{I}_{\Xkey}))\oplus\VD(\mHD(\Theta))\oplus\mVD(\Theta)\oplus\mymatrix{I}_{\CKS},\\
& = &
\VD(\HD(\mymatrix{I}))\oplus\VD(\HD(\mymatrix{I}_{\Xkey}))\oplus\VD(\mHD(\Theta))\oplus\mVD(\Theta)\oplus\mymatrix{I}_{\CKS},\\
& = & \VD(\HD(\mymatrix{I}))\oplus\tilde{\mymatrix{I}}_{\key}.
\end{eqnarray*}
This proves the proposition.
\end{proof}

Since $\mHD(\Theta)$ and $\mVD(\Theta)$ are both independent of the
plaintext and the ciphertext, they are uniquely determined by the
key $(x_0,y_0,K,N)$. This means that $\tilde{\mymatrix{I}}_{\key}$
is also uniquely determined by the key $(x_0,y_0,K,N)$. Therefore,
$\tilde{\mymatrix{I}}_{\key}$ can be used as an equivalent key of
mPPS09 exactly in the same way as $\mymatrix{I}_{\key}$ in PPS09. In
fact, even the determination process of the equivalent key is also
the same:
\[
\tilde{\mymatrix{I}}_{\key}=\VD(\HD(\mymatrix{I}))\oplus\mymatrix{I}'.
\]
This means that the same known/chosen-plaintext attack can be
applied to mPPS09 without any change to the program. In other words,
the security of mPPS09 against Rhouma et al.'s attack remains the
same as that of the original scheme PPS09.

We have performed some experiments to verify the correctness of the
conclusion. With the secret key $(x_0, y_0, K,
N)=(3.98235562892545,$ $ 1.34536356538912, $ $108.54365761256745,$ $
110)$, the equivalent key $\tilde{\mymatrix{I}}_{\key}$
was constructed from a known plain-image ``Lenna'' and the
corresponding cipher-image, which are shown in
Figs.~\ref{figure:knownplaintextattack}a) and b), respectively.
Then, \modified{$\tilde{\mymatrix{I}}_{\key}$} was used to recover
\modified{a} cipher-image shown in
Fig.~\ref{figure:knownplaintextattack}c, and the
plain-image ``Peppers'' (Fig.~\ref{figure:knownplaintextattack}d)
was successfully recovered.

\begin{figure}[!htb]
\centering
\begin{minipage}[t]{\imagewidth}
\centering
\includegraphics[width=\textwidth]{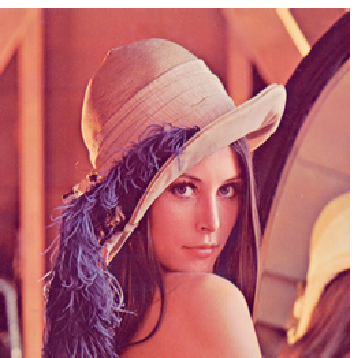}
a)
\end{minipage}
\begin{minipage}[t]{\imagewidth}
\centering
\includegraphics[width=\textwidth]{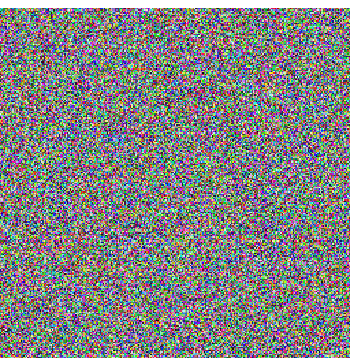}
b)
\end{minipage}\\
\begin{minipage}[t]{\imagewidth}
\centering
\includegraphics[width=\textwidth]{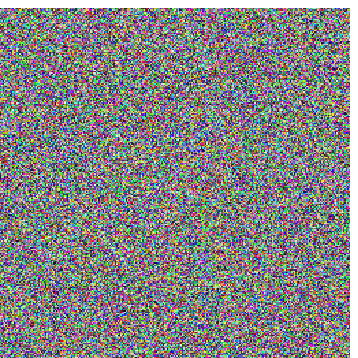}
c)
\end{minipage}
\begin{minipage}[t]{\imagewidth}
\centering
\includegraphics[width=\textwidth]{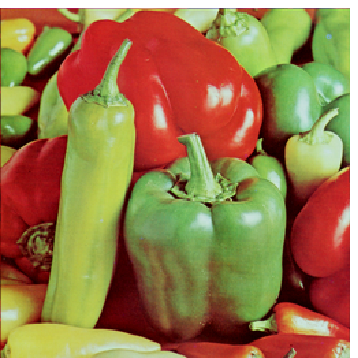}
d)
\end{minipage}
\caption{An experimental result of the proposed known-plaintext
attack: a) the known plain-image ``Lenna''; b) the corresponding cipher-image;
c) a cipher-image encrypted with the
same key; d) the recovered plain-image ``Peppers''.}
\label{figure:knownplaintextattack}
\end{figure}

\subsection{Other security weaknesses of PPS09 and mPPS09}

\subsubsection{Insufficient randomness of the PRNS $\{B_{\CKS}(i,j)\}$}

As illustrated in \cite{Li:AttackingBitshiftXOR2007}, the randomness
of pseudo-random bit sequences derived from chaotic orbits
of the logistic map is very weak. To further verify the randomness
of the PRNS $\{B_{\CKS}(i,j)\}$ generated via the
logistic map with control parameter 4.0, we
tested 100 PRNSs of length $512\times 512=262144$ (the number of
bytes used for encryption of a $512\times 512$ plain color image) by
using the NIST statistical test suite \cite{Rukhin:TestPRNG:NIST}.
The 100 sequences were generated with randomly selected
secret keys, and transformed to 1-D bit sequences by
concatenating the bits of all the elements. For each test, the
default significance level 0.01 was used. The results are shown in
Table~\ref{table:test}, from which one can see that the PRNS
$\{B_{\CKS}(i,j)\}$ is not random enough.

\begin{table}[!htbp]
\centering\caption{The performed tests with respect to a
significance level 0.01 and the number of sequences passing each
test in 100 randomly generated sequences.}\label{table:test}
\begin{tabular}{c|c}
\hline Name of Test                                    & Number of Passed Sequences\\
\hline\hline Frequency                                 & 95 \\
\hline Block Frequency ($m=100$)                       & 0  \\
\hline Cumulative Sums-Forward                         & 93 \\
\hline Runs                                            & 0  \\
\hline Rank                                            & 0  \\
\hline Non-overlapping Template ($m=9$, $B=010000111$) & 10 \\
\hline Serial ($m=16$)                                 & 0  \\
\hline Approximate Entropy ($m=10$)                    & 0  \\
\hline FFT                                             & 0  \\
\hline
\end{tabular}
\end{table}

\subsubsection{Insufficient sensitivity with respect to change of plaintext}

In \cite{Pareek:CNSNS2009,Pareek:modify:CNSNS2009},
Patidar et al. recognized that the sensitivity of
cipher-image with respect to change of plain-image is very
important. However, both PPS09 and mPPS09 are actually
very far from the desired property. As well known in cryptography,
this property is termed as avalanche effect. Ideally, it requires
the change of any single bit of plain-image will make every bit of
cipher-image change with a probability of one half.

For both PPS09 and mPPS09, the following equation holds for two
plain-images $\mymatrix{I}$ and
$\mymatrix{J}=\mymatrix{I}\oplus\mymatrix{I}_{\Delta}$:
\begin{eqnarray*}
\mymatrix{I}'\oplus\mymatrix{J}' & = &
(\VD(\HD(\mymatrix{I})))\oplus
(\VD(\HD(\mymatrix{J}))),\\
& = & \VD(\HD(\mymatrix{I}\oplus\mymatrix{J})),\\
& = & \VD(\HD(\mymatrix{I}_{\Delta})).
\end{eqnarray*}

The above equation implies the following two facts:
\begin{itemize}
\item
any change in a single bitplane will not change any other bitplanes
in the cipher-image;

\item
a change in plain-image $\mymatrix{I}_{\Delta}$ will cause a change
pattern determined by $\VD(\HD(\mymatrix{I}_{\Delta}))$, which is
far from a random pattern.
\end{itemize}

\color{black}

To show this defect clearly, we made an experiment by changing only
one bit of the red channel of a plain-image. It is found that only
some bits on the same bitplane in the corresponding cipher-image
were changed. The locations of the changed bits can be
seen from the differential cipher-image
$\VD(\HD(\mymatrix{I}_{\Delta}))$ and its three color channels as
shown in Fig.~\ref{figure:changebits}. Apparently, the change
pattern is far from random and balanced.

\begin{figure}[!htb]
\centering
\begin{minipage}[t]{\imagewidth}
\centering
\includegraphics[width=\textwidth]{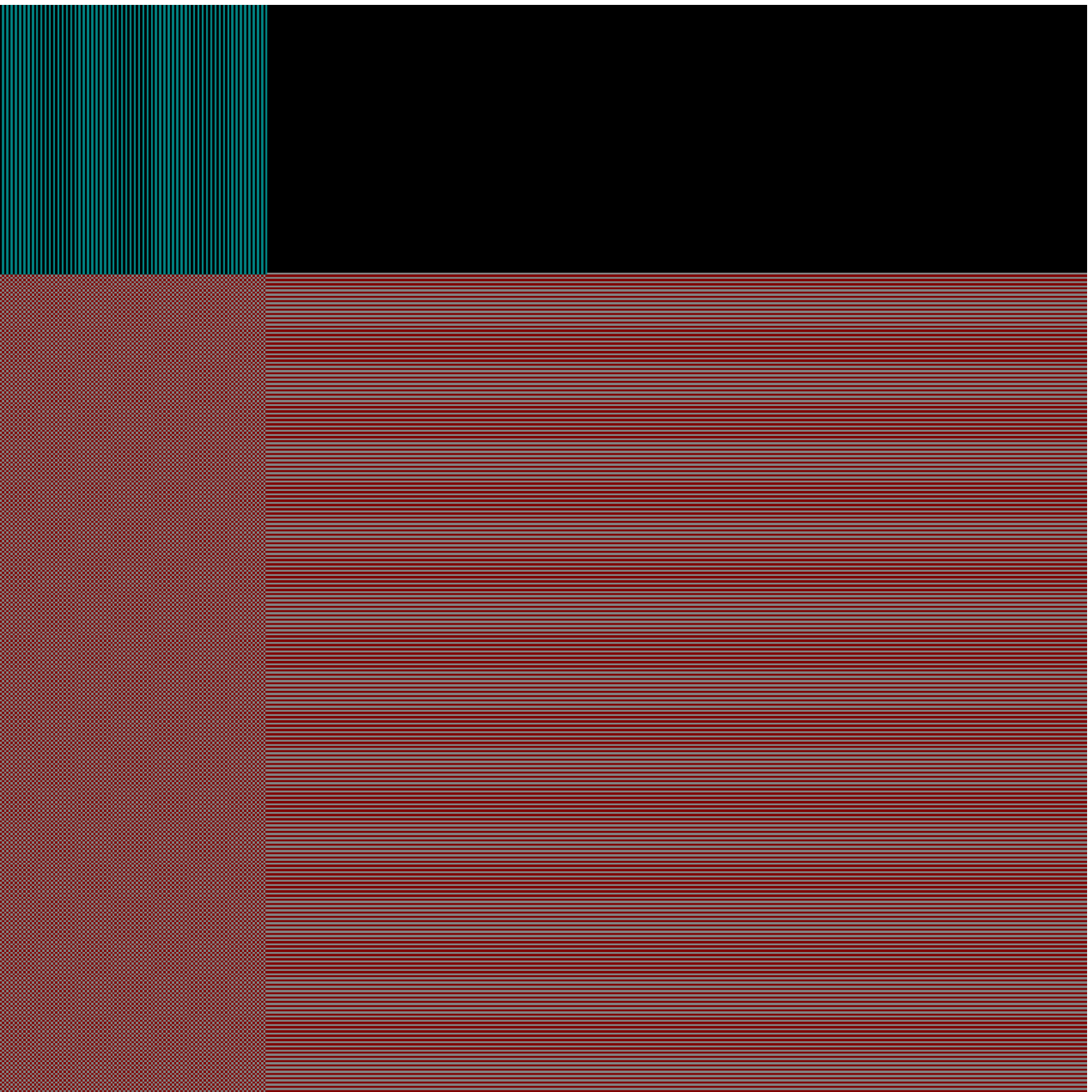}
a)
\end{minipage}\\
\begin{minipage}[t]{\imagewidth}
\centering
\includegraphics[width=\textwidth]{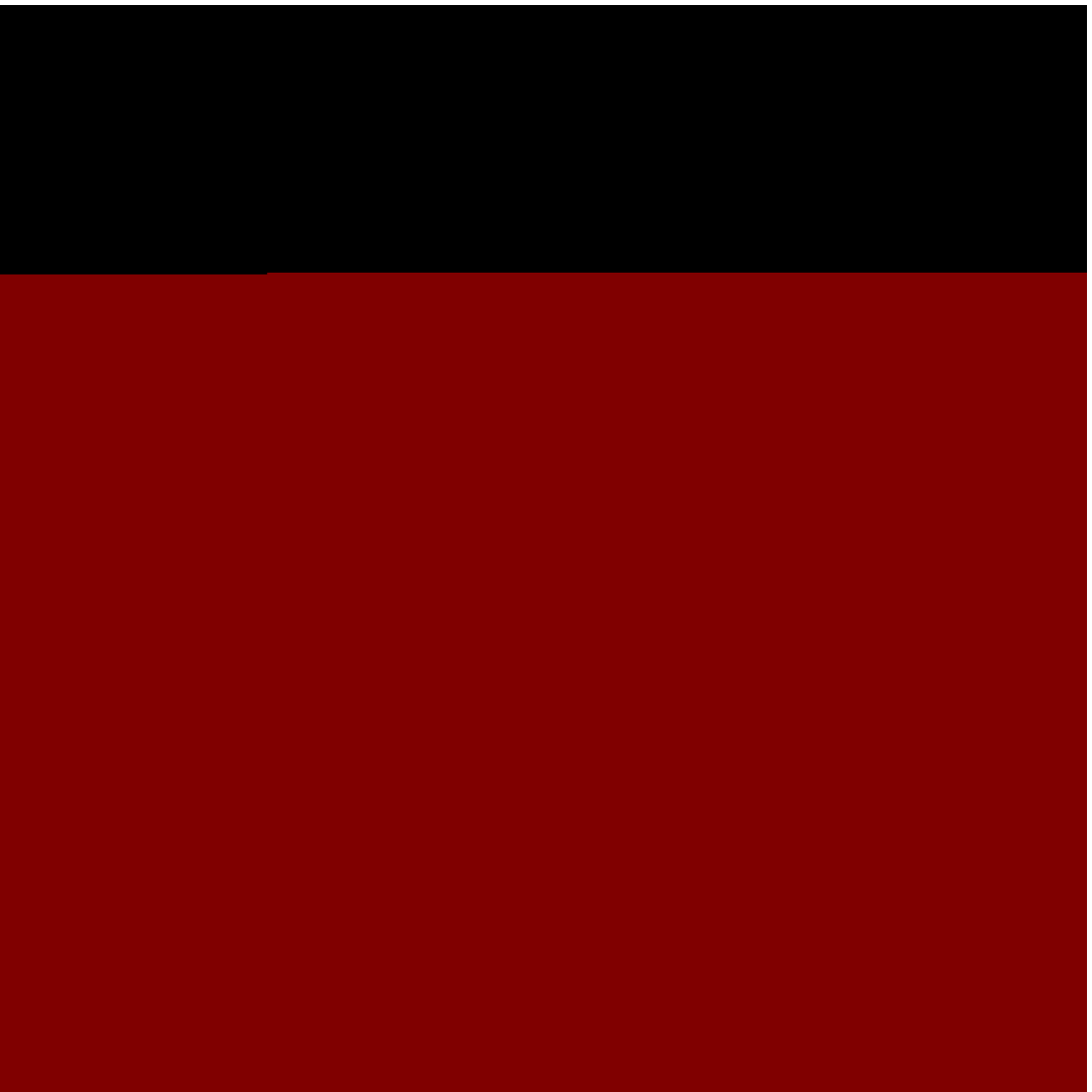}
b)
\end{minipage}
\begin{minipage}[t]{\imagewidth}
\centering
\includegraphics[width=\textwidth]{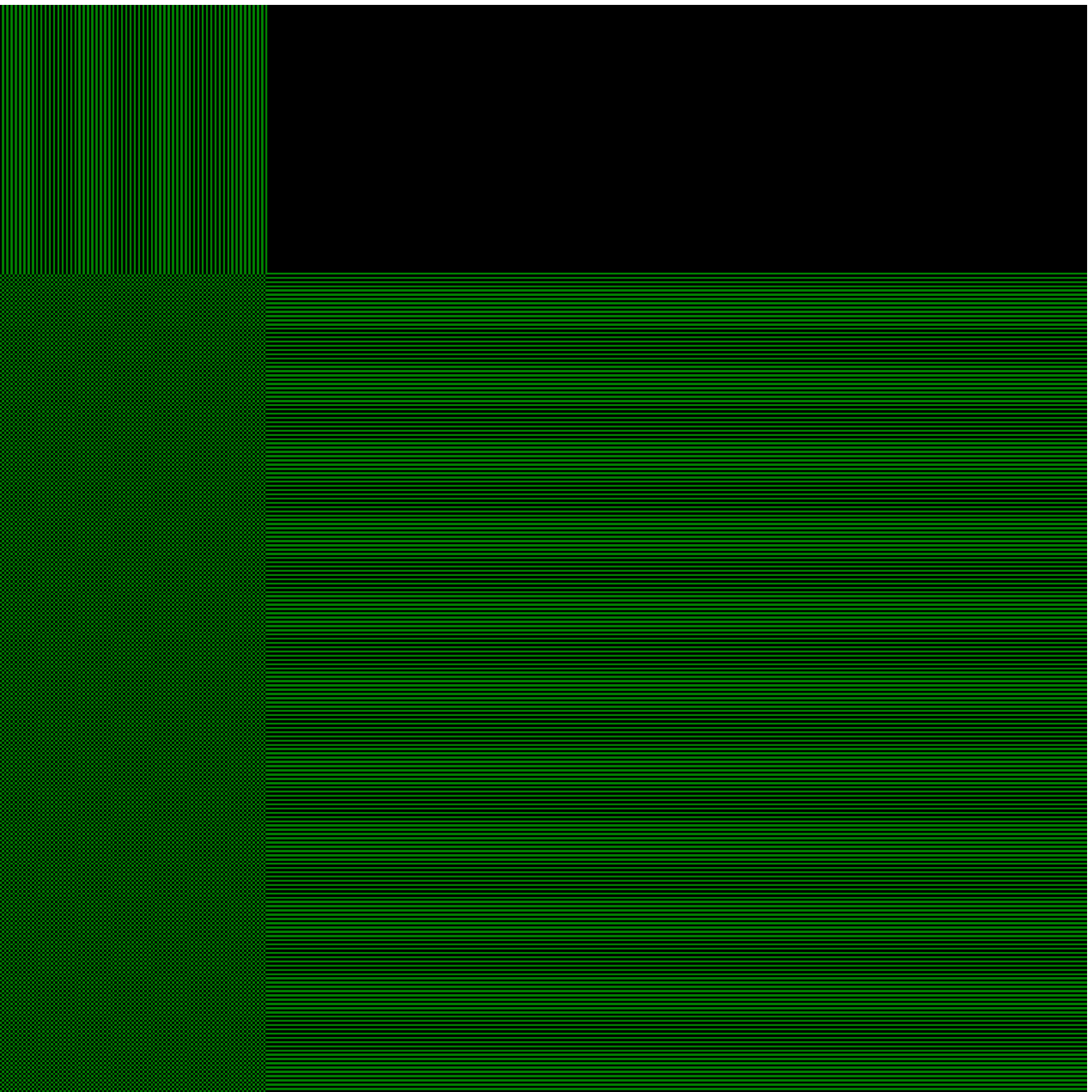}
c)
\end{minipage}
\begin{minipage}[t]{\imagewidth}
\centering
\includegraphics[width=\textwidth]{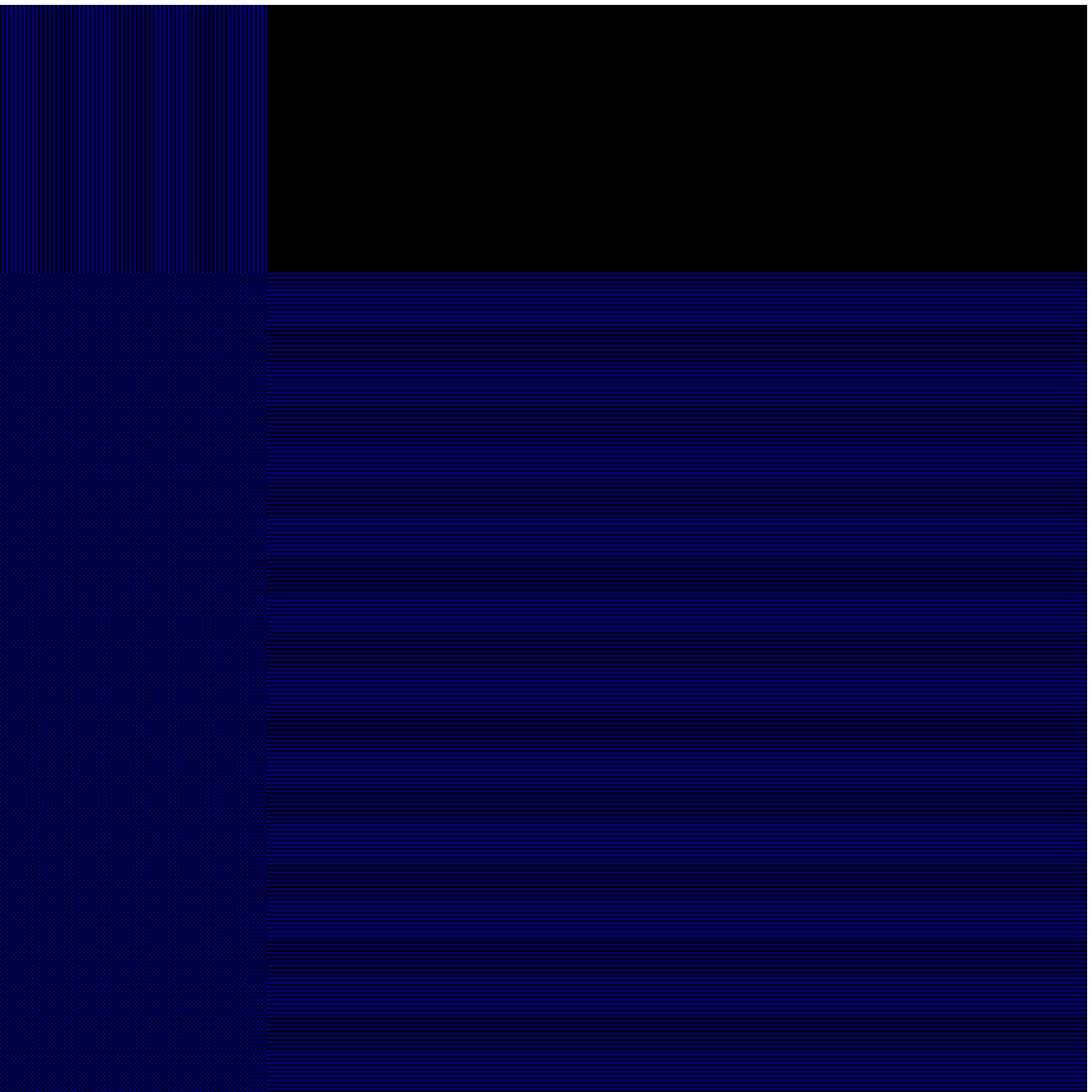}
d)
\end{minipage}
\caption{The differential cipher-image and its three color
channels, when the MSB (i.e., the 8-th bit) of $R(127, 127)$ in a
plain-image was changed: a) the differential cipher-image; b) red
channel; c) green channel; d) blue
channel.}\label{figure:changebits}
\end{figure}

\section{Conclusion}

In this paper, the security of the image encryption scheme
proposed in \cite{Pareek:modify:CNSNS2009} (a modified version of
the one proposed in \cite{Pareek:CNSNS2009}) is re-evaluated. It is
found that the scheme is still insecure against a
known/chosen-plaintext attack which can break the original scheme in
\cite{Rhouma:BreakPareek:CNSNS09}. In addition, two more security
weaknesses of both the original and the modified image encryption
schemes are reported: insufficient randomness of a PRNS involved,
and insufficient sensitivity with respect to change of plain-image.
Due to such a low level of security, we recommend not to use the
image encryption schemes under study unless their security
is further enhanced with more complicated countermeasures.

\section*{Acknowledgement}

Chengqing Li was supported by The Hong Kong Polytechnic University's
Postdoctoral Fellowships Scheme under grant no. G-YX2L. Shujun Li
was supported by a fellowship from the Zukunftskolleg of the
Universit\"at Konstanz, Germany, which is part of the
``Exzellenzinitiative'' Program of the DFG (German Research
Foundation).

\bibliographystyle{elsarticle-num}
\bibliography{Pareek}

\begin{thebibliography}{10}
\expandafter\ifx\csname url\endcsname\relax
  \def\url#1{\texttt{#1}}\fi
\expandafter\ifx\csname urlprefix\endcsname\relax\def\urlprefix{URL }\fi
\expandafter\ifx\csname href\endcsname\relax
  \def\href#1#2{#2} \def\path#1{#1}\fi

\bibitem{Chen&Yen:RCES:JSA2003}
H.-C. Chen, J.-C. Yen, A new cryptography system and its {VLSI} realization,
  Journal of Systems Architecture 49~(7-9) (2003) 355--367.

\bibitem{YaobinMao:CSF2004}
G.~Chen, Y.~Mao, C.~K. Chui, A symmetric image encryption scheme based on {3D}
  chaotic cat maps, Chaos, Solitons \& Fractals 21~(3) (2004) 749--761.

\bibitem{Flores:EncryptLatticeChaos06}
N.~J. Flores-Carmona, M.~Carpio-Valadez, Encryption and decryption of images
  with chaotic map lattices, Chaos 16~(3) (2006) art. no. 033118.

\bibitem{Wong:ChaosEncrypt:IEEETCASII}
K.-W. Wong, C.-H. Yuen, Embedding compression in chaos-based cryptography, IEEE
  Transactions on Circuits and Systems II: Express Brief 55~(11) (2008)
  1193--1197.

\bibitem{Kaiwang:PLA2005}
K.~Wang, W.~Pei, L.~Zou, A.~Song, Z.~He, On the security of 3{D} cat map based
  symmetric image encryption scheme, Physics Letters A 343~(6) (2005) 432--439.

\bibitem{Li:AttackingMaoScheme2007}
C.~Li, G.~Chen, On the security of a class of image encryption schemes, in:
  Proceedings of 2008 IEEE Int. Symposium on Circuits and Systems, 2008, pp.
  3290--3293.

\bibitem{Li:AttackingRCES2008}
S.~Li, C.~Li, G.~Chen, K.-T. Lo, Cryptanalysis of the {RCES/RSES} image
  encryption scheme, Journal of Systems and Software 81~(7) (2008) 1130--1143.

\bibitem{David:AttackingChaos08}
D.~Arroyo, R.~Rhouma, G.~Alvarez, S.~Li, V.~Fernandez, On the security of a new
  image encryption scheme based on chaotic map lattices, Chaos 18~(3) (2008)
  art. no. 033112.

\bibitem{Li:BreakImageCipher:IVC09}
C.~Li, S.~Li, G.~Chen, W.~A. Halang, Cryptanalysis of an image encryption
  scheme based on a compound chaotic sequence, Image and Vision Computing
  27~(8) (2009) 1035--1039.

\bibitem{AlvarezLi:Rules:IJBC2006}
G.~\'{A}lvarez, S.~Li, Some basic cryptographic requirements for chaos-based
  cryptosystems, International Journal of Bifurcation and Chaos 16~(8) (2006)
  2129--2151.

\bibitem{Li:ChaosImageVideoEncryption:Handbook2004}
S.~Li, G.~Chen, X.~Zheng, Chaos-based encryption for digital images and videos,
  in: B.~Furht, D.~Kirovski (Eds.), Multimedia Security Handbook, CRC Press,
  2004, Ch.~4, pp. 133--167.

\bibitem{Pareek:PLA2003}
N.~Pareek, V.~Patidar, K.~Sud, Discrete chaotic cryptography using external
  key, Physics Letters A 309~(1-2) (2003) 75--82.

\bibitem{Pareek:CNSNS2005}
N.~Pareek, V.~Patidar, K.~Sud, Cryptography using multiple one-dimensional
  chaotic maps, Communications in Nonlinear Science and Numerical Simulation
  10~(7) (2005) 715--723.

\bibitem{Pareek:ImageEncrypt:IVC2006}
N.~Pareek, V.~Patidar, K.~Sud, Image encryption using chaotic logistic map,
  Image and Vision Computing 24~(9) (2006) 926--934.

\bibitem{Pareek:CNSNS2009}
V.~Patidar, N.~Pareek, K.~Sud, A new substitution-diffusion based image cipher
  using chaotic standard and logistic maps, Communications in Nonlinear Science
  and Numerical Simulation 14~(7) (2009) 3056--3075.

\bibitem{Alvarez:PLA2003}
G.~\'{A}lvarez, F.~Montoya, M.~Romera, G.~Pastor, Cryptanalysis of a discrete
  chaotic cryptosystem using external key, Physics Letters A 319~(3-4) (2003)
  334--339.

\bibitem{Li:AttackingCNSNS2008}
C.~Li, S.~Li, G.~\'{A}lvarez, G.~Chen, K.-T. Lo, Cryptanalysis of a chaotic
  block cipher with external key and its improved version, Chaos, Solitons \&
  Fractals 37~(1) (2008) 299--307.

\bibitem{Li:AttackingIVC2009}
C.~Li, S.~Li, M.~Asim, J.~Nunez, G.~Alvarez, G.~Chen, On the security defects
  of an image encryption scheme, Image and Vision Computing 27~(9) (2009)
  1371--1381.

\bibitem{Rhouma:BreakPareek:CNSNS09}
R.~Rhouma, E.~Solak, S.~Belghith, Cryptanalysis of a new
  substitution{-}diffusion based image cipher, Communications in Nonlinear
  Science and Numerical Simulation, in press, doi:10.1016/j.cnsns.2009.07.007
  (2009).

\bibitem{Pareek:modify:CNSNS2009}
V.~Patidar, N.~Pareek, K.~Sud, Modified substitution{-}diffusion image cipher
  using chaotic standard and logistic maps, Communications in Nonlinear Science
  and Numerical Simulation, in press, doi:10.1016/j.cnsns.2009.11.010 (2009).

\bibitem{Li:AttackingBitshiftXOR2007}
C.~Li, S.~Li, G.~\'{A}lvarez, G.~Chen, K.-T. Lo, Cryptanalysis of two chaotic
  encryption schemes based on circular bit shift and {XOR} operations, Physics
  Letters A 369~(1-2) (2007) 23--30.

\bibitem{Rukhin:TestPRNG:NIST}
A.~Rukhin, et~al., A statistical test suite for random and pseudorandom number
  generators for cryptographic applications, NIST Special Publication 800-22,
  available online at \url{http://csrc.nist.gov/rng/rng2.html} (2001).

\end{thebibliography}

\end{document}